\title{Bridging Threat Models and Detections: Formal Verification via CADP}
\author{Dumitru-Bogdan Prelipcean
\institute{
Bitdefender, Iași, Romania\\
Alexandru Ioan Cuza University, Iași, Romania\\
LACL, Université Paris-Est Créteil, France}
\email{bprelipcean@bitdefender.com}
\and 
Cătălin Dima
\institute{LACL, Université Paris-Est Créteil, France}
\email{dima@u-pec.fr}
}
\theoremstyle{plain}
\newtheorem{theorem}{Theorem}
\newtheorem{proposition}[theorem]{Proposition}
\theoremstyle{definition}
\newtheorem{definition}{Definition}
\theoremstyle{remark}
\begin{document}

\maketitle

\begin{abstract}
Threat detection systems rely on rule-based logic to identify adversarial behaviors, yet the conformance of these rules to high-level threat models is rarely verified formally. We present a formal verification framework that models both detection logic and attack trees as labeled transition systems (LTSs), enabling automated conformance checking via bisimulation and weak trace inclusion. Detection rules specified in the Generic Threat Detection Language (GTDL, a general-purpose detection language we formalize in this work) are assigned a compositional operational semantics, and threat models expressed as attack trees are interpreted as LTSs through a structural trace semantics. Both representations are translated to LNT, a modeling language supported by the CADP toolbox. This common semantic domain enables systematic and automated verification of detection coverage. We evaluate our approach on real-world malware scenarios such as LokiBot and Emotet and provide scalability analysis through parametric synthetic models. Results confirm that our methodology identifies semantic mismatches between threat models and detection rules, supports iterative refinement, and scales to realistic threat landscapes.
\end{abstract}


\section{Introduction}

Modern threat detection operates in a rapidly evolving landscape, making the correctness of detection rules critical. Analysts derive threat descriptions from natural language intelligence reports, while detection systems implement operational rules in automated engines. This disconnect often leads to incomplete coverage: for instance, a credential-stealing malware manipulating registry keys might be inadequately captured due to semantic gaps.

Cyber-threat descriptions appear in three dominant flavors.  
\emph{(i) Indicator-centric lists}—e.g., STIX/CybOX observables or simple CSV feeds—enumerate hashes, domain names, or registry paths observed in the wild~\cite{stix}.  
\emph{(ii) TTP catalogs} abstract from raw indicators to attacker behavior; the MITRE ATT\&CK matrix, CAPEC patterns, and the Lockheed Martin kill chain are canonical exemplars~\cite{mitre}.  
\emph{(iii) Structured threat models} such as attack graphs and attack trees capture causal dependencies and attacker goals, enabling what-if reasoning and quantitative risk analysis~\cite{schneier1999attack,jhawar2015stochastic}.  
Most intelligence reports published by CERTs and security vendors mix these forms: prose narrative plus ATT\&CK references, backed by Indicators of Compromise (IoCs). While expressive, such hybrid descriptions are informal and therefore unsuited for automated correctness checks.

Operational detection logic is codified in a broad ecosystem of domain-specific languages (DSLs): YARA byte-signature rules for file scanning~\cite{yara}; Snort/Suricata and Zeek policies for network intrusion detection systems (IDS)~\cite{snort}; Sigma and EQL for SIEM/XDR log analytics; and proprietary query languages such as KQL (Azure Sentinel) or SPL (Splunk).  
These languages vary in syntax but share a declarative core: predicates over event streams, optional state tracking, and actions (alert, drop, quarantine). Recent work proposes higher-level formalisms—e.g., DSLs that compile into multiple back-ends~\cite{sigma} or graph-based correlation engines—but validation still relies on executing rules against telemetry. 

\textbf{Threat descriptions} are usually validated only informally: peer review by analysts, cross-referencing with observed incidents, or replaying red-team scenarios. Formal soundness proofs for ATT\&CK-style narratives are rare; some studies apply logic-based consistency checks to attack trees~\cite{pinchinat} or use probabilistic model checking for quantitative risk~\cite{jhawar2015stochastic}.  

\textbf{Detection rules}, in turn, are tested empirically, e.g., against malware sandboxes (Cuckoo), open telemetry corpora (Mordor), or atomic red-team frameworks such as MITRE CALDERA and Atomic Red Team. Vendors run regression suites, but these provide no \emph{formal} guarantee that a rule faithfully captures the intended threat behavior—false negatives linger until real incidents expose them.

\textbf{Our contribution.} We propose a rigorous methodology that bridges this gap using formal verification. Our approach comprises three phases: (1) modeling threats with \emph{attack trees}, (2) specifying detection logic in the \emph{Generic Threat Detection Language} (GTDL), and (3) translating both into \emph{LNT}~\cite{lnt}, a formal language for modeling and verifying distributed systems. The resulting models are then analyzed automatically with the CADP toolbox~\cite{garavel2011cadp}.

Both GTDL rules and attack trees are given formal semantics as labeled transition systems (LTSs). Conformance is checked via trace inclusion or bisimulation between their LTSs, with LNT providing the common semantic domain for CADP’s model checking tools. A shared observable alphabet and semantics-preserving translation rules ensure that detection signatures can be automatically verified against the adversarial behaviors described by attack trees.

We implemented this workflow in a command-line tool that automates translation and verification. Experiments on real-world malware (LokiBot, Emotet) and on scalable synthetic models demonstrate that the approach identifies coverage gaps, supports rule refinement, and scales to realistic threat landscapes.

This work builds upon a detection specification language developed at Bitdefender\cite{bd}. A simplified version, GTDL, is presented here along with its operational semantics. GTDL structures detection logic through conditions, state transitions, and external plugin calls, supporting parallel evaluation for real-time analysis. GTDL signatures evaluate system events, behaviors, and indicators, enabling metadata inclusion, state tracking, and automated responses such as alerts or mitigations.

Unlike specialized detection languages like YARA~\cite{yara}, Snort~\cite{snort}, and Sigma~\cite{sigma}, GTDL offers a structured, generic design that extends across SIEM, IDS, and XDR architectures. It supports compositional semantics, cross-platform use, and scalability.

Our key contributions are: (1) a systematic pipeline to translate GTDL detection rules and attack trees into LNT processes with shared semantics; (2) structural operational semantics for GTDL and formalized conformance checking using bisimulation and weak trace inclusion; (3) a CLI tool for GTDL-to-LNT translation with benchmark automation; (4) a benchmark suite including real-world threats (LokiBot, Emotet) and parametric scenarios; and (5) a demonstration of applicability and scalability using CADP, showing how detection gaps can be diagnosed and repaired.

\section{Related Work}

\paragraph*{Formal verification back-ends for security.}
Automated model checking tools—including SPIN,\\
mCRL\cite{bunte2019mcrl2}, FDR4 for CSP\cite{gibson2014fdr}, and CWB\cite{cleaveland1993concurrency} have a long tradition of providing safety and liveness guarantees for protocols and middleware. While their underlying process algebras (e.g., CCS, CSP, mCRL) support basic constructs—action-prefixing (`a · P`), nondeterministic choice, and parallel composition—these primitives often lead to cumbersome encodings when modeling more complex operators, such as those found in attack trees that require flexible sequencing, choice, and OR/AND semantics. In contrast, the LNT language and the CADP toolchain~\cite{garavel2011cadp,sequential} offer richer compositional constructs, enabling more natural and concise modeling of threat patterns and formal verification tasks. Recent surveys~\cite{kulik2021survey} highlight that scalability is an ongoing challenge for formal methods in security—a challenge we address in Section~\ref{sec:evaluation} via parametric and scalable families of threat and rule models.

\paragraph*{Analysis and validation of attack trees.}
Since Schneier’s seminal introduction of attack trees~\cite{schneier1999attack}, there has been significant progress in extending their expressiveness, including probabilistic models~\cite{jhawar2015stochastic}, motivation-aware extensions~\cite{schmittner2021model}, and temporal/logical operators~\cite{pinchinat}. However, the majority of existing research focuses on analyzing the internal properties of attack trees themselves—for example, checking consistency, calculating risk, or supporting what-if analyses—rather than formally verifying whether implemented defenses (such as detection rules) actually cover the behaviors described by those trees. Our work fills this important gap by formally comparing the LTS semantics of an attack tree with that of a concrete detection rule.

\paragraph*{Detection-rule validation.}
Although detection rules are routinely tested empirically in sandboxes, 
datasets, or red-team scenarios they seldom undergo formal verification. 
Existing works largely analyze malware behavior or attack-tree consistency, 
but few verify the soundness of the detection logic itself. Formal approaches to date have largely focused on malware behavior itself (for example, state-space exploration of binaries~\cite{song2012efficient}, abstraction of low-level system actions~\cite{beaucamps}, or evolutionary models of malware families~\cite{androidevolution}) rather than on the detection logic tasked with identifying those behaviors. Christodorescu et al.~\cite{cristodorescu} initiate the extraction of formal specifications from malware samples, but do not address the challenge of formally proving soundness between specification and detection rule.

\paragraph*{Bridging threat models and detection rules.}
Some recent works attempt to bridge high-level threat modeling with operational detection by mapping MITRE ATT\&CK techniques\cite{mitre} to SIEM queries, or by correlating indicators of compromise (IoCs) with IDS signatures. However, these mappings are generally manual or rely on superficial syntactic correspondence, making them brittle and hard to maintain. To the best of our knowledge, no previous framework provides an \emph{automatic, semantics-preserving} conformance check between a structured threat description and an executable detection rule. By assigning LTS semantics to both attack trees and detection rules, and leveraging CADP for automated equivalence checking, our methodology is the first to offer this level of formal assurance.

Building on these foundations, our methodology uniquely integrates the strengths of formal modeling, attack tree analysis, and rule-based detection to address the gap between high-level threat models and executable detection logic.

\section{Proposed Methodology}
We start by directly presenting our methodology, which is structured into three main phases: threat modeling, detection modeling, and formal verification. 
By integrating these phases, we ensure a rigorous and systematic approach to validate the effectiveness of detection mechanisms against specific threats. Below, we provide a detailed description of each phase.

The first step involves describing the threat or malware using \textit{Attack Trees}, a hierarchical formalism commonly used for representing the decomposition of an attack into sub-goals. 
Once the Attack Tree is defined, it is translated in the \textit{LNT} language, a process algebra supported by the \textit{CADP} (Construction and analysis of Distributed Processes) toolbox
\cite{garavel2011cadp}.

The second step is specifying the detection using the \textit{Generic Threat Detection Language} (GTDL) signatures.

Each set of GTDL signatures are then translated into an \textit{LNT} model, by relying on the operational semantics we provide for the language and the translation rules we propose.

The final step involves using the Bisimulator tool from the CADP suite to analyze and compare the translated models of attack trees, which represent threat descriptions, against the corresponding detection models. This process determines whether the models are similar/bisimilar or have trace equivalence/inclusion, ensuring that the detection mechanisms effectively capture the described threats.

To ground this methodology, we outline the transition system semantics of Attack Trees and provide rules for translating these into LNT.
We then present a structural operational semantics for GTDL and provide rules for translating GTDL specs into LNT too. 
Finally, we recall the formal definitions of (bi)simulations, trace inclusion and equivalence relations on transition systems on which 
the third step in our methodology relies and present the details of the utilization of Bisimulator tool. 

\subsection{Attack Trees}
In this subsection we recall the notion of attack trees and their labeled transition system semantics \cite{pinchinat}. 

Recall first that a labeled transition system (LTS) is a tuple 
\[
TS = (S, Act, \rightarrow, s_0)
\]
where 
$S$ is the \emph{set of states}, 
$Act$ is the set of \emph{transition labels}, 
$\rightarrow \subseteq S \times Act \times S$ is the \emph{transition relation}, 
and $s_0 \in S$ is the distinguished \emph{initial state}. 

We write $s \xrightarrow{a} t$ whenever $(s,a,t) \in \rightarrow$. 
A \emph{path} (of length $k$) in $TS$ is a sequence of transitions
\[
\rho = s_0 \xrightarrow{a_1} s_1 \xrightarrow{a_2} \dots \xrightarrow{a_k} s_k,
\]
starting from the initial state $s_0$. 
The \emph{trace} of $\rho$ is the sequence of actions 
$tr(\rho) = a_1\ldots a_k$.
The set of traces of the transition system $TS$ is denoted 
\(\text{Traces}(TS)\). As usual, we denote the set of traces over $Act$ as $Act^*$.

Transition systems can be composed using union, concatenation and parallel composition \cite{mcbook}. 
We recall briefly here just the parallel composition operator on set of traces: 
given two traces $w_1 = a_1\ldots a_k$ and $w_2 = b_1\ldots b_l$,
their parallel composition is the set of traces denoted $w_1 \| w_2$ and defined as follows:
\[
w_1 \| w_2 = \{ u_1\cdot v_1 \ldots u_{k+l} \cdot v_{k+l} \mid u_i, v_i \in Act^*, u_1\ldots u_{k+l} = w_1, v_1 \ldots v_{k+l} = w_2 \} 
\]
The construction of a transition system for the parallel composition of two given transition systems 
is a classical construction, also called "asynchronous composition" or "shuffle" in automata theory \cite{hopcroft}.
\begin{definition}[Attack Trees]
An \emph{attack tree} is a finite, rooted tree built from a set of atomic actions $Act$ using the following constructors:
\[
A ::= \mathbf{LEAF}_a \;\mid\; \mathbf{OR}(A_1,\ldots,A_n) \;\mid\; \mathbf{AND}(A_1,\ldots,A_n) \;\mid\; \mathbf{SAND}(A_1,\ldots,A_n)
\]
where $a \in Act$ and $n \geq 2$. 
\begin{itemize}
    \item $\mathbf{LEAF}_a$ represents an atomic attack step labeled by action $a$.  
    \item $\mathbf{OR}(A_1,\ldots,A_n)$ succeeds if at least one of the subtrees $A_i$ succeeds.  
    \item $\mathbf{AND}(A_1,\ldots,A_n)$ succeeds if all subtrees $A_i$ succeed, in arbitrary order (parallel composition).  
    \item $\mathbf{SAND}(A_1,\ldots,A_n)$ succeeds if all subtrees $A_i$ succeed in the strict left-to-right order.  
\end{itemize}
Since attack trees are finite and contain no recursion, each attack tree $A$ denotes a finite set of execution traces.
\end{definition}

\begin{definition}[Trace Semantics of Attack Trees]
Let \(A\) be an attack tree constructed from atomic actions in \(Act\), using the operators \(\mathbf{LEAF}_a\), \(\mathbf{OR}\), \(\mathbf{AND}\), and \(\mathbf{SAND}\).
The set of traces \(\mathcal{T}(A)\) is defined inductively as follows:
\[
\begin{aligned}
  \mathcal{T}(\mathbf{LEAF}_a) & = \{a\} \\[1ex]
  \mathcal{T}(\mathbf{OR}(A_1, \ldots, A_n)) & = \bigcup_{i=1}^n \mathcal{T}(A_i) \\[1ex]
  \mathcal{T}(\mathbf{AND}(A_1, \ldots, A_n)) & = \mathcal{T}(A_1) \ \| \ \ldots \ \| \ \mathcal{T}(A_n) \\[1ex]
  \mathcal{T}(\mathbf{SAND}(A_1, \ldots, A_n)) & = \mathcal{T}(A_1) \cdot \ldots \cdot \mathcal{T}(A_n)
\end{aligned}
\]
Here, \(\|\) denotes the shuffle (interleaving) of traces, and \(\cdot\) denotes concatenation of traces.
\end{definition}

\noindent
Intuitively, $\mathcal{T}(A)$ is the set of all possible observable action sequences (traces) corresponding to successful executions of the attack tree $A$.

\subsection*{Translation of Attack Trees into LNT}

We give here the rules for translating an attack tree into a set of LNT processes such that, whenever an atomic action is executed, a corresponding non-silent event is emitted.

\paragraph*{Traces of LNT Processes.}
The set of traces $Traces(P)$ of an LNT process $P$ is defined as the set of sequences of observable actions produced by $P$ under the standard operational semantics of LNT, that is, the sequences of labels corresponding to executions from the initial process to termination~\cite{lnt,processalgebrabook}.
For the relevant constructs:
\begin{itemize}
  \item $Traces(\texttt{process A [a:any] is a end process;}) = \{a\}$
  \item $Traces(\texttt{select } P_1 [] \cdots [] P_n \texttt{ end select;}) = \bigcup_{i=1}^n Traces(P_i)$
  \item $Traces(\texttt{par } P_1 \texttt{ || } \cdots \texttt{ || } P_n \texttt{ end par;}) = Traces(P_1) \| \cdots \| Traces(P_n)$
  \item $Traces(\texttt{P}_1; \cdots; \texttt{P}_n) = Traces(P_1) \cdot \cdots \cdot Traces(P_n)$
\end{itemize}
Here, $\|$ denotes the shuffle (interleaving) of traces, and $\cdot$ denotes concatenation.

\begin{definition}[Attack Tree to LNT Translation]
We define a translation function $tr(\cdot)$ from attack trees to LNT processes as follows:
\[
\begin{aligned}
  tr(\mathbf{LEAF}_a) \quad &= \texttt{process LEAF\_a [a: any] is a end process;} \\[2ex]
  tr(\mathbf{OR}(A_1, \ldots, A_n)) \quad &= \texttt{process OR\_A is} \\
    &\quad \texttt{select $P_{A_1}$ [] $\cdots$ [] $P_{A_n}$ end select; end process} \\[2ex]
  tr(\mathbf{AND}(A_1, \ldots, A_n)) \quad &= \texttt{process AND\_A is} \\
    &\quad \texttt{par $P_{A_1}$ || $\cdots$ || $P_{A_n}$ end par; end process} \\[2ex]
  tr(\mathbf{SAND}(A_1, \ldots, A_n)) \quad &= \texttt{process SAND\_A is} \\
    &\quad \texttt{$P_{A_1}$ ; $\cdots$ ; $P_{A_n}$ ; end process;}
\end{aligned}
\]
\end{definition}

Here, $P_{A_i}$ denotes the name of the process corresponding to the subtree $A_i$, obtained by recursively applying $tr(\cdot)$. 
Only leaf processes carry a formal parameter \texttt{[a:any]} to represent the observable action emitted; composite operators (\textbf{OR}, \textbf{AND}, \textbf{SAND}) serve only to compose their children and therefore do not themselves introduce parameters.

\begin{proposition}[Correctness of the Translation]
For every attack tree $A$, the traces of its LNT translation coincide with its trace semantics:
\[
Traces(tr(A)) = \mathcal{T}(A).
\]
\end{proposition}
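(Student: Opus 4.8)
The plan is to prove the equality by structural induction on the attack tree $A$. The proof is guided by a single structural observation: the four trace-semantic rules listed for LNT processes are exactly \emph{compositional} and mirror, clause by clause, the inductive definition of $\mathcal{T}(\cdot)$ — \texttt{select} corresponds to union, \texttt{par} \texttt{||} \texttt{end par} to the shuffle $\|$, and sequential composition ``$;$'' to concatenation $\cdot$. The induction is well-founded because attack trees are finite and recursion-free. The induction hypothesis is that every proper subtree $A_i$ satisfies $Traces(P_{A_i}) = \mathcal{T}(A_i)$, where $P_{A_i} = tr(A_i)$.

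For the base case $A = \mathbf{LEAF}_a$, the translation produces \texttt{process LEAF\_a [a:any] is a end process}, whose trace set is $\{a\}$ by the first LNT trace rule; this coincides with $\mathcal{T}(\mathbf{LEAF}_a) = \{a\}$. For the inductive step I would treat the three composite constructors uniformly. In each case $tr$ wraps a behavior expression $B$ in a named process \texttt{process C is $B$ end process}, so I first note that introducing a process name does not change the observable behavior, giving $Traces(tr(A)) = Traces(B)$. It then suffices to apply the matching LNT trace rule to $B$ and invoke the induction hypothesis on each child: for $\mathbf{OR}$, the \texttt{select} rule yields $\bigcup_i Traces(P_{A_i}) = \bigcup_i \mathcal{T}(A_i) = \mathcal{T}(\mathbf{OR}(A_1,\dots,A_n))$; for $\mathbf{AND}$, the \texttt{par} rule yields $Traces(P_{A_1}) \| \cdots \| Traces(P_{A_n}) = \mathcal{T}(A_1) \| \cdots \| \mathcal{T}(A_n)$; and for $\mathbf{SAND}$, the sequential-composition rule yields $Traces(P_{A_1}) \cdots Traces(P_{A_n}) = \mathcal{T}(A_1) \cdots \mathcal{T}(A_n)$. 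Each right-hand side is, by definition, $\mathcal{T}(A)$, closing the step.

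The step needing the most care is the $\mathbf{AND}$ case. The shuffle equation for \texttt{par} holds only when the parallel branches run without forced synchronization, so I would verify that the translation emits the LNT parallel with an empty synchronization gate list. This matters because the subtrees may label steps with a common atomic action $a$; under a nonempty synchronization set LNT would force the two occurrences of gate $a$ to rendezvous, yielding a synchronized product rather than the free interleaving $\|$. Establishing that the generated \texttt{par} genuinely realizes the unsynchronized shuffle even in the presence of shared action labels is the crux of the argument. The two remaining ingredients — the base case and the lemma that naming a behavior expression preserves its trace set — are routine, as is the bookkeeping that recursively applying $tr$ to the children produces precisely the processes $P_{A_i}$ named in the composite translations.
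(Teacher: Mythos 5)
Your proof takes essentially the same route as the paper's: a structural induction in which each translation clause is matched against the corresponding clause of $\mathcal{T}(\cdot)$, with the base case handling leaves and the three composite constructors discharged by the \texttt{select}/\texttt{par}/sequencing trace rules plus the induction hypothesis. The paper's own proof is a four-line sketch that treats all of this as immediate; your added care about the $\mathbf{AND}$ case (that the generated \texttt{par} must carry an empty synchronization gate list, since shared leaf labels under a nonempty gate set would force rendezvous rather than free interleaving) and the lemma that naming a behavior expression preserves its traces are refinements the paper leaves implicit, not a different argument.
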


\begin{proof}
The proof is immediate by structural induction on $A$. 
Each translation clause mirrors the corresponding semantic clause in the definition of $\mathcal{T}(\cdot)$:
\begin{itemize}
  \item leaves emit a single observable action, 
  \item $\mathbf{OR}$ translates to nondeterministic choice, 
  \item $\mathbf{AND}$ translates to parallel composition (shuffle), and 
  \item $\mathbf{SAND}$ translates to sequential composition (concatenation). 
\end{itemize}
Hence the equality follows directly.
\end{proof}

\subsection{Generic Threat Detection Language (GTDL)}
\label{GTDL}
The introduction and formalization of Generic Threat Detection Language (GTDL) are key contributions of this paper. GTDL is a domain-specific language for defining structured detection rules essential for real-time attack monitoring and identification.

Developed through research at Bitdefender, GTDL addresses challenges in complex threat environments. It introduces detection rules, or signatures, which define conditions for identifying attack components. Its modular architecture enhances effectiveness by enabling interaction and shared states between signatures.

GTDL signatures specify conditions for detecting cyber threats based on system events, process behaviors, and contextual indicators. An evaluation engine interprets these rules, periodically processing signals to determine if a threat condition is met. Upon detection, the engine can set global flags, generate alerts, or trigger automated mitigation actions. In our model, we focus on the initial response by setting predefined flags corresponding to the targeted attack.
 
Figure \ref{fig:GTDL2} gives the workflow of this engine and its interaction with signatures, illustrating how signals are processed, rules are evaluated, and threat responses are executed.

\begin{figure}[ht]
	\centering
	\includegraphics[scale=0.40]{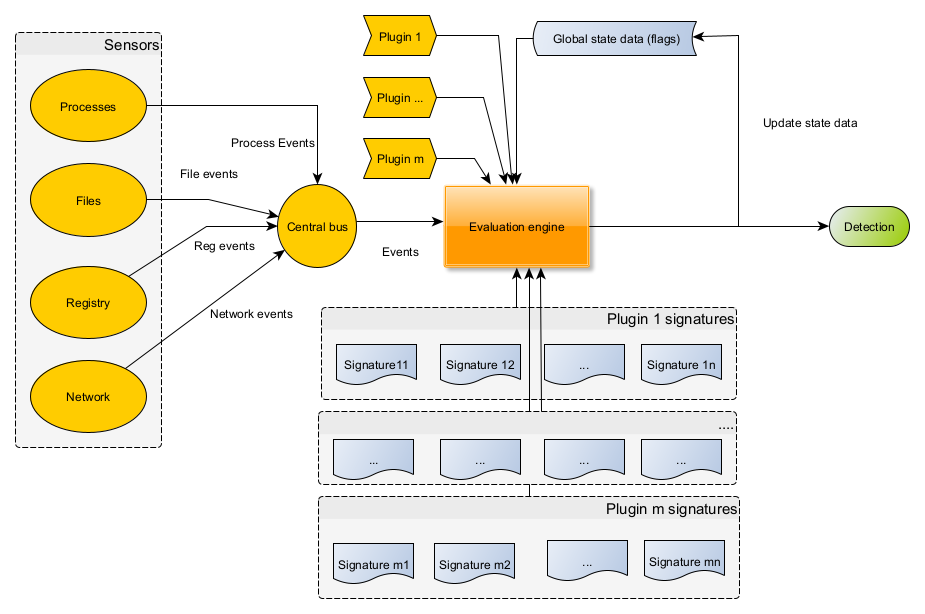}
	\caption{Workflow of the GTDL detection engine.}
	\label{fig:GTDL2}
	 \vspace*{-50pt}
 \end{figure}
The GTDL detection framework consists of several core components that work together to identify and respond to threats. The \textbf{sensors layer} continuously monitors system activities, including process execution, file modifications, registry changes, and network traffic. Data collected by the sensors is forwarded to the \textbf{central bus}, which aggregates and directs it to the \textbf{evaluation engine} for real-time analysis. The engine processes incoming data using \textbf{global state information} and integrates with \textbf{plugins and signatures}. Plugins define detection logic, while signatures contain metadata and rules that specify conditions for identifying malicious behavior. Together, these components enable an efficient and scalable threat detection mechanism.

The detection process follows a structured workflow to identify and respond to potential threats. It begins with the \textbf{sensors} layer, which captures system events such as process executions, file modifications, and network activities, relaying this data to the appropriate \textbf{plugins}. These plugins process the raw event data, extract relevant security indicators, and submit the processed information to the \textbf{evaluation engine}. The evaluation engine then applies \textbf{detection rules} defined in GTDL signatures, leveraging both real-time data and historical state tracking. When the specified conditions in a \textbf{detection signature} are met, the system flags the corresponding threat, triggering appropriate response actions such as alerting security teams or initiating automated mitigation measures. This structured approach ensures efficient and accurate threat detection.
\paragraph{GTDL Syntax.}
We first describe the core syntax of a single GTDL rule. 
Each detection rule is expressed as a finite program built from assignments, conditionals, and a detection action:
\begin{align*}
sigdet &::= \textbf{[DETECTION]}~\texttt{Name = } 'D'~\ldots~\textbf{[RULE]}~block \\
block  &::= assign~;~block \mid stmt \\
assign &::= v := inPluginCall(f, arg) \mid v := e \\
stmt   &::= \textbf{IF}~cond~\textbf{THEN}~block~\textbf{ELSE}~block~\textbf{END IF} \mid action \\
cond   &::= v \mid e \mid cond~\textbf{AND}~cond \mid cond~\textbf{OR}~cond \mid \textbf{NOT}~cond \mid v == e \mid v \neq e \\
e      &::= \textbf{true} \mid \textbf{false} \mid var \\
action &::= \texttt{GlobalFlag.Set("D")} \\
\end{align*}

Here $v$ ranges over Boolean variables, and $inPluginCall(f,arg)$ invokes a plugin API that returns a Boolean. 
The detection action \texttt{GlobalFlag.Set("D")} emits a detection for~$D$.

The grammar above describes the syntax of an individual rule. 
At runtime, the detection engine manages multiple rules simultaneously: 
\begin{itemize}
    \item Each rule is executed as a separate process, and the engine runs these processes \emph{in parallel}. 
    \item Input events (e.g., process creation, file access) are streamed to all rules. 
    \item Rules are re-executed (looped) on each new event, but this iteration belongs to the execution environment, not the syntax of GTDL itself. 
\end{itemize}

Thus, although the syntax has no explicit parallel composition or looping constructs, these are provided implicitly by the detection engine running many rules continuously over a stream of system events.

\paragraph{Structural operational semantics.}

We formalize the semantics of a \emph{single} GTDL rule.
A \emph{configuration} is a pair $(C,\sigma)$ where $C$ is a command (block or statement) and
$\sigma : \textsf{Var} \to \mathbb{B}$ is a store mapping Boolean variables to the Boolean domain
$\mathbb{B}=\{\mathsf{tt},\mathsf{ff}\}$.
We write unlabeled steps as $\rightarrow$ and use $\xrightarrow{d}$ only for the observable
\emph{detection} action associated to $D$.
The special command \texttt{halt} has no outgoing transitions.

\begin{table}[ht]
  \centering
  \renewcommand{\arraystretch}{1.25}
  \begin{tabular}{|p{4.1cm}|p{10cm}|}
    \hline
    \textbf{Rule} & \textbf{Transition} \\ \hline
    Assignment (plugin) &
    $(\,v := \textit{inPluginCall}(f,arg);\;C,\;\sigma\,)\ \rightarrow\
     (\,C,\;\sigma[v \mapsto \mathrm{pluginValue}(f,arg)]\,)$ \\ \hline
    Assignment (expr) &
    $(\,v := e;\;C,\;\sigma\,)\ \rightarrow\
     (\,C,\;\sigma[v \mapsto \llbracket e \rrbracket_\sigma]\,)$ \\ \hline
    If-True &
    $(\,\textbf{IF}\ c\ \textbf{THEN}\ B_1\ \textbf{ELSE}\ B_2\ \textbf{END IF},\;\sigma\,)\ \rightarrow\
     (\,B_1,\;\sigma\,)$ \quad if $\llbracket c \rrbracket_\sigma=\mathsf{tt}$ \\ \hline
    If-False &
    $(\,\textbf{IF}\ c\ \textbf{THEN}\ B_1\ \textbf{ELSE}\ B_2\ \textbf{END IF},\;\sigma\,)\ \rightarrow\
     (\,B_2,\;\sigma\,)$ \quad if $\llbracket c \rrbracket_\sigma=\mathsf{ff}$ \\ \hline
    Detection Action &
    $(\,\texttt{GlobalFlag.Set("D")},\;\sigma\,)\ \xrightarrow{d}\ (\,\texttt{halt},\;\sigma\,)$ \\ \hline
  \end{tabular}
  \caption{Small-step rules for a single GTDL rule}
  \label{tab:sos_rules_gtdl_core}
\end{table}

The evaluation function $\llbracket\cdot\rrbracket_\sigma$ maps expressions/conditions to $\mathbb{B}$:

\begin{table}[ht]
  \centering
  \renewcommand{\arraystretch}{1.25}
  \begin{tabular}{|p{6.4cm}|p{7.7cm}|}
    \hline
    \textbf{Syntactic form} & \textbf{Semantic value} \\ \hline
    $\llbracket v \rrbracket_\sigma$ & $\sigma(v)$ \\
    $\llbracket \texttt{true} \rrbracket_\sigma$ & $\mathsf{tt}$ \\
    $\llbracket \texttt{false} \rrbracket_\sigma$ & $\mathsf{ff}$ \\
    $\llbracket \textit{inPluginCall}(f,arg) \rrbracket_\sigma$ & $\mathrm{pluginValue}(f,arg) \in \mathbb{B}$ \\ \hline
    $\llbracket c_1\ \textbf{AND}\ c_2 \rrbracket_\sigma$ & $\llbracket c_1 \rrbracket_\sigma \wedge \llbracket c_2 \rrbracket_\sigma$ \\
    $\llbracket c_1\ \textbf{OR}\ c_2 \rrbracket_\sigma$ & $\llbracket c_1 \rrbracket_\sigma \vee \llbracket c_2 \rrbracket_\sigma$ \\
    $\llbracket \textbf{NOT}\ c \rrbracket_\sigma$ & $\neg \llbracket c \rrbracket_\sigma$ \\ \hline
    $\llbracket v == e \rrbracket_\sigma$ & $(\,\sigma(v) = \llbracket e \rrbracket_\sigma\,)$ \\
    $\llbracket v \neq e \rrbracket_\sigma$ & $(\,\sigma(v) \neq \llbracket e \rrbracket_\sigma\,)$ \\ \hline
  \end{tabular}
  \caption{Valuation of expressions and conditions}
  \label{tab:valuation_gtdl}
\end{table}

Here $\mathrm{pluginValue}(f,arg)$ denotes the Boolean result produced by the plugin on the current input event,
and $d$ is the observable action label corresponding to detection $D$.
A \emph{trace} is the (finite) sequence of observable labels emitted during an execution; with the rules above, only $d$ can appear in traces.
The rules in Table~\ref{tab:sos_rules_gtdl_core} define the stepwise execution of a single GTDL rule. 
In practice, the detection engine manages a set of such rules as follows:
\begin{itemize}
  \item Each rule is executed as an independent process, and the engine runs all rules \emph{in parallel}. 
  \item Input events are broadcast to all rules; each rule updates its variables and evaluates conditions accordingly. 
  \item The engine repeatedly re-executes rules on successive input events, which provides the effect of looping over the event stream. 
\end{itemize}

Thus, while the core language syntax has no explicit constructs for parallel composition or looping, these arise naturally from the execution model: multiple rules are run in parallel and continuously re-applied to the incoming event stream.

\paragraph*{The model of GTDL in LNT}
Variables assigned via plugin calls become \emph{process parameters} in LNT, e.g., if the GTDL rule refers to process name and process path, then the LNT process receives these as input arguments. The detection logic (assignments, conditions, actions) is translated homomorphically into the body of the LNT process. The detection output $GlobalFlag.Set("D")$ is translated to an output on channel $dSet$.

\[
\begin{aligned}
\mathcal{T}(\textbf{[DETECTION]}~\texttt{Name = } 'D'~\ldots~\textbf{[RULE]}~block) =\ & \texttt{process } D [dSet:\texttt{FLAG\_CHANNEL}]~(\mathcal{P})~\texttt{is} \\
& \qquad \mathcal{T}(block) \\
& \texttt{end process}
\end{aligned}
\]
where $\mathcal{P}$ is the (ordered) tuple of input variables corresponding to all $inPluginCall$ instances in $block$.

For the translation $\mathcal{T}(block)$:
\begin{table}[ht]
    \centering
    \renewcommand{\arraystretch}{1.5}
    \begin{tabular}{|p{5.6cm}|p{8.5cm}|}
        \hline
        \textbf{GTDL Construct} & \textbf{LNT Translation $\mathcal{T}(\cdot)$} \\ \hline
        $v := inPluginCall(f, arg);~rest$ &
        Process parameter for $v$ (do not generate assignment; $v$ is an input parameter). Translate $rest$. \\ \hline
        $v := e;~rest$ &
        \texttt{var} $v := \mathcal{T}(e);$ $\mathcal{T}(rest)$ \\ \hline
        $\textbf{IF}\ cond\ \textbf{THEN}\ B_1\ \textbf{ELSE}\ B_2\ \textbf{END IF}$ &
        \texttt{if} $\mathcal{T}(cond)$ \texttt{then} $\mathcal{T}(B_1)$ \texttt{else} $\mathcal{T}(B_2)$ \texttt{end if} \\ \hline
        $\texttt{GlobalFlag.Set("D")}$ &
        \texttt{dSet(TRUE)} \\ \hline
        $cond_1~\textbf{AND}~cond_2$ &
        $\mathcal{T}(cond_1)~\texttt{and}~\mathcal{T}(cond_2)$ \\ \hline
        $cond_1~\textbf{OR}~cond_2$ &
        $\mathcal{T}(cond_1)~\texttt{or}~\mathcal{T}(cond_2)$ \\ \hline
        $\textbf{NOT}~cond$ &
        \texttt{not} $\mathcal{T}(cond)$ \\ \hline
        $v == e$ &
        $v~\texttt{==}~\mathcal{T}(e)$ \\ \hline
        $v \neq e$ &
        $v~\texttt{<>}~\mathcal{T}(e)$ \\ \hline
        $\textbf{[DETECTION]}$ $D$ $[...]$ $[RULE]~block$ &
        \begin{minipage}[t]{8.3cm}
        \texttt{process} $D$ \texttt{[dSet:FLAG\_CHANNEL]} $(\mathcal{P})$ \texttt{is}\\
        \qquad $\mathcal{T}(block)$\\
        \texttt{end process}\\
        where $\mathcal{P}$ is the tuple of variables assigned via plugin calls.
        \end{minipage} \\ \hline
    \end{tabular}
    \caption{Translation from GTDL to LNT}
    \label{tab:translation_gtdl_lnt}
\end{table}

\textbf{Parallel signatures:} If the GTDL detection engine has multiple rules, their LNT translations are composed in parallel.

Let $\llbracket P \rrbracket_\text{GTDL}$ be the set of traces of a GTDL rule $P$ (i.e., sequences of detection actions for all possible input streams), and $Traces(\mathcal{T}(P))$ be the traces of its LNT translation.

\begin{theorem}
For any GTDL detection rule $P$, $Traces(\mathcal{T}(P)) = \llbracket P \rrbracket_\text{GTDL}$.
\end{theorem}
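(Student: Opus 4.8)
The plan is to establish the equality by a pointwise correspondence between executions of the GTDL small-step semantics of Table~\ref{tab:sos_rules_gtdl_core} and executions of the translated LNT process, indexed by the choice of plugin return values. The central observation is that in both models the only observable label is the detection action ($d$ in GTDL, equivalently $dSet(\texttt{TRUE})$ in LNT), while every other step---expression assignments, plugin assignments, and the branching of conditionals---is silent. Moreover, the grammar forces every \emph{block} to terminate in the unique \textbf{action}, so every terminating execution emits exactly one $d$ and then reaches \texttt{halt}. Consequently both $\llbracket P\rrbracket_\text{GTDL}$ and $Traces(\mathcal{T}(P))$ are contained in $\{d\}$, and the task reduces to showing that a given input fires the detection in the GTDL semantics exactly when the corresponding parameter instantiation fires it in LNT.

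First I would fix the index of quantification. A rule $P$ contains finitely many plugin calls, say $k$ of them; an input stream determines the tuple of Boolean results in $\mathbb{B}^k$, which is precisely the tuple of process parameters $\mathcal{P}$ supplied to $\mathcal{T}(P)$ according to the translation of Table~\ref{tab:translation_gtdl_lnt}. Thus both trace sets are unions indexed by the same set $\mathbb{B}^k$, and it suffices to match the two executions for each fixed tuple. Second I would prove a valuation lemma: for every condition $c$ and every store $\sigma$ agreeing with the LNT environment $\eta$ on the relevant variables, $\llbracket c\rrbracket_\sigma$ equals the LNT evaluation of $\mathcal{T}(c)$ under $\eta$. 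This is a routine structural induction on $c$ using Table~\ref{tab:valuation_gtdl} together with the homomorphic clauses of Table~\ref{tab:translation_gtdl_lnt}, under which \textbf{AND}, \textbf{OR}, \textbf{NOT} map to \texttt{and}, \texttt{or}, \texttt{not}, and $==$, $\neq$ map to \texttt{==}, \texttt{<>}. With this lemma every conditional selects the same branch in the two semantics.

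Finally I would run a structural induction on the block/statement syntax showing that, for each fixed parameter tuple, the GTDL configuration and the LNT process reach the detection action after the same sequence of silent steps, hence emit the same (weak) trace. Because the translation is homomorphic---plugin assignments vanish into parameters, expression assignments become internal \texttt{var}-bindings, conditionals map to \texttt{if}-\texttt{then}-\texttt{else}, and the detection action maps to $dSet(\texttt{TRUE})$---the control-flow trees coincide and every root-to-leaf path terminates in the detection action. Taking the union over $\mathbb{B}^k$ then yields $Traces(\mathcal{T}(P)) = \llbracket P\rrbracket_\text{GTDL}$.

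The step I expect to be the main obstacle is reconciling the structural mismatch at plugin calls: in GTDL a plugin call is an explicit silent transition that mutates the store, whereas in LNT it is a parameter frozen at instantiation, contributing no transition at all. The reconciliation rests on the silence of these steps---they never contribute a label---so that only their effect on subsequent condition evaluation matters, and that effect is captured by the pointwise identification of $\mathbb{B}^k$. Making this precise requires phrasing the correspondence as a weak, observation-respecting relation rather than a strict lock-step bisimulation, so that the extra silent assignment steps on the GTDL side are absorbed on the way to the common observable trace.
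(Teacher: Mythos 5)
Your proposal follows essentially the same route as the paper's proof: a structural induction on the rule syntax in which each GTDL construct is matched against its homomorphic LNT image (detection action to \texttt{dSet(TRUE)}, plugin assignments to process parameters, other assignments to local bindings, conditionals to \texttt{if}/\texttt{then}/\texttt{else}, Boolean connectives clause by clause). The paper's argument is only a sketch, and your additions --- indexing both trace sets by the tuple of plugin return values in $\mathbb{B}^k$, the valuation lemma for conditions, and the explicit treatment of the silent-step mismatch at plugin calls --- are exactly the details it leaves implicit, so the proposal is correct and, if anything, more careful than the published proof.
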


\begin{proof}
By induction on the structure of $P$:
\begin{itemize}
    \item \textbf{Base case:} For an action $GlobalFlag.Set("D")$, both GTDL and LNT emit $d$ or $dSet(TRUE)$ respectively.
    \item \textbf{Assignment:} Assignments from plugin calls are mapped to process parameters; all other assignments are preserved in LNT as local variables, so the variable environments match.
    \item \textbf{Conditionals:} IF/THEN/ELSE structures are translated to LNT `if/then/else` blocks, preserving branch logic and thus trace outcomes.
    \item \textbf{Boolean connectives:} Boolean logic (AND, OR, NOT, ==, $\neq$) is mapped homomorphically to LNT.
    \item \textbf{Parallel composition:} Multiple GTDL rules are composed in parallel in LNT, so traces interleave as in GTDL.
\end{itemize}
Thus, for any input, the sequence of outputs (detection actions) is the same in GTDL and its LNT translation.
\end{proof}

\subsection{Formal Verification}

The last step in our methodology is to utilize the Bisimulator tool to check that some detection signature detects a malicious behavior corresponding with an attack tree, 
by feeding the transition systems generated for both to the Bisimulator tool \cite{bisimulator}.
Recall that Bisimulator can check strong or weak simulation and bisimulation, as well as trace inclusion/equivalence and weak trace inclusion/equivalence. 
Hence, our methodology interprets the "correspondence" between a signature specification and an attack tree as any of these relations between their transition systems. 
For the sake of completeness, we recall briefly here the notions of (strong and weak) simulation and bisimulation and of weak traces associated to a transition system.

\begin{definition}[Simulation/bisimulation]
Given two transition systems $\text{LTS}_1 = (S_1, \Sigma, \rightarrow_1, s_1^0)$ and $\text{LTS}_2 = (S_2, \Sigma, \rightarrow_2, s_2^0)$,
a \emph{simulation} is a binary relation $\mathcal{R} \subseteq S_1 \times S_2$
satisfying the following two properties:
\begin{enumerate}
    \item For each pair of states $(s_1,s_2) \in \mathcal{R}$, if 
    $s_1 \xrightarrow{a}_1 s_1'$ for some $s_1'\in S_1$ then there exists $s_2' \in S_2$ with $s_2 \xrightarrow{a}_2 s_2'$ and $(s_1', s_2') \in \mathcal{R}$.
    \item $ (s_1^0, s_2^0) \in \mathcal{R}$, 
\end{enumerate}
A \emph{bisimulation} is a simulation $\mathcal{R} \subseteq S_1 \times S_2$ such that $\mathcal{R}^{-1}$ is a simulation too.
\end{definition}

We say that $\text{LTS}_1$ \textbf{simulates} $\text{LTS}_2$, denoted $\text{LTS}_1 \preceq \text{LTS}_2$, 
whenever there exists a simulation $\mathcal{R}$ between the two systems. 
When $\mathcal{R}$ is actually a bisimulation, we say that $\text{LTS}_1$ is \textbf{bisimilar with} $\text{LTS}_2$ and denote  $\text{LTS}_1 \sim \text{LTS}_2$.

\begin{definition}[Weak Simulation \& Bisimulation]
Assume that the action set $\Sigma$ of both systems contains a special action $\tau$ (representing internal, non-observable actions).
A \emph{weak simulation} 
is a relation $\mathcal{R} \subseteq S_1 \times S_2$ satisfying the following three properties:
\begin{enumerate}
    \item For each pair of states $(s_1,s_2) \in \mathcal{R}$, if 
    $s_1 \xrightarrow{a}_1 s_1'$ for some $s_1'\in S_1$ then there exists $s_2', s_2'', s_2''' \in S_2$ with $s_2 \xrightarrow{\tau^*} s_2' \xrightarrow{a}_2 s_2'' \xrightarrow{\tau^*} s_2'''$ and $(s_1', s_2''') \in \mathcal{R}$.
    \item For each pair of states $(s_1,s_2) \in \mathcal{R}$, if 
    $s_1 \xrightarrow{\tau}_1 s_1'$ for some $s_1'\in S_1$ then there exists $s_2' \in S_2$ with $s_2 \xrightarrow{\tau^*} s_2'$ and $(s_1', s_2') \in \mathcal{R}$.
    \item $ (s_1^0, s_2^0) \in \mathcal{R}$, 
\end{enumerate}

A \emph{weak bisimulation} is a weak simulation $\mathcal{R} \subseteq S_1 \times S_2$ such that $\mathcal{R}^{-1}$ is a weak simulation too.

\end{definition}

Similarly to the case of (non-weak) simulations, we say that $\text{LTS}_1$ \textbf{weakly simulates} $\text{LTS}_2$, denoted $\text{LTS}_1 \precapprox \text{LTS}_2$, 
whenever there exists a simulation relation $\mathcal{R}$ between the two systems. 
When $\mathcal{R}$ is actually a bisimulation, we say that $\text{LTS}_1$ is \textbf{weakly bisimilar with} $\text{LTS}_2$ and denote  $\text{LTS}_1 \approx \text{LTS}_2$.

The Bisimulator tool can check trace inclusion, trace equivalence, and related properties over both standard and \emph{weak traces}. 
In our encoding, silent actions (\(\tau\)) represent internal GTDL bookkeeping (assignments, variable updates, condition checks) 
and are abstracted away, so that only observable detections remain under weak equivalence.
In our context, \emph{trace inclusion} guarantees that all adversarial behaviors 
specified in the attack tree are detected (i.e., no false negatives). 
By contrast, \emph{bisimulation} additionally rules out over-approximation, 
ensuring that the detection logic does not generate traces beyond those 
described by the threat model (i.e., no false positives).

Consider the morphism $\varphi : \Sigma^* \longrightarrow \Sigma^* $ defined by $\varphi(\tau) = \epsilon$ and $\varphi(a) = a$ for $a\neq \tau$.
We say that two traces $w_1,w_2 \in \Sigma^*$ are \textbf{weakly equivalent}, denoted $w_1 \simeq w_2$, if $\varphi(w_1) = \varphi(w_2)$.
A \textbf{weak trace} is then an equivalence class for the (equivalence) relation $\simeq$. 
Two transition systems  $\text{LTS}_1$  and $\text{LTS}_2$ are then weak trace equivalent if their sets of weak traces are the same.

\paragraph{Ensuring a common set of actions for both attack trees and detection models}
(Bi)simulations and trace equivalences/inclusions between two labeled transition systems can only be checked when both systems share the same set of actions. 
This important requirement is implemented in our approach by the existence of a unique channel of observable actions. Recall that Bisimulator considers that internal actions in LNT processes as the equivalent of the silent action $\tau$, 
and all other actions as observable. A bijective correspondence between actions executed by the LNT translation of an attack tree 
and system events detected by GTDL signatures
is then ensured as follows:

\begin{itemize}
\item The LNT translation of both the attack tree and the GTDL detection utilize a communication channel which records all events occurring during their execution.  
    \item The \textbf{attack tree model} emits an event when an action (labeling a leaf node) is executed performed. \begin{lstlisting}
process AttackLeaf_a [flag_a:FLAG_CHANNEL] is 
    flag_a(TRUE);
end process
\end{lstlisting} \texttt{flag(TRUE)} indicates that the attack step has been performed.
    \item The \textbf{detection model} emits an event when an action is \textbf{detected}.\begin{lstlisting}
process Signature [flag_a:FLAG_CHANNEL] (in var x:any) is 
    if detect_action_a(x) then flag_a(TRUE);
end process
\end{lstlisting} When \texttt{detect\_action\_a(x)} holds (placeholder for any condition), GTDL registers the action by setting the flag.
\end{itemize}

This ensures that both models share the same alphabet (observable action), enabling the use of Bisimulator for verifying the above mentioned relations.

 In operational settings, detection events are often partial, noisy, or ambiguous, which may lead to false positives or negatives. Addressing this challenge requires extending our methodology to handle probabilistic or fuzzy conformance (e.g., via probabilistic model checking or trace abstraction), and relaxing strict equivalence to tolerance-based metrics. We consider this an important avenue for future research.

 It needs to be mentioned that  our methodology relies on the availability and quality of attack trees as threat models. Constructing attack trees for complex or novel systems is a challenging and often manual process, requiring extensive domain expertise and system-specific knowledge~\cite{schneier1999attack,pietre}. Moreover, standardized or public datasets of attack trees are rare, and existing trees are typically tailored to specific infrastructures or incident histories. As such, our formal guarantees are contingent on the threat model accurately reflecting relevant adversarial behavior; the verification process cannot compensate for incomplete or imprecise modeling. Future work should explore techniques for semi-automated attack tree synthesis, model validation, and integration of alternative threat modeling frameworks (e.g., kill chains, MITRE ATT\&CK graphs) to alleviate this bottleneck.

\section{Evaluation}
\label{sec:evaluation}
We evaluate our methodology on eight real–world intrusion scenarios. For every case study we \emph{start from the production detection content that ships with a commercial EDR/SIEM platform actually deployed}. We then apply our formal workflow to highlight blind spots and iteratively patch the rules. Each subsection below narrates the attack progress, the vendor-supplied detection baseline, the improvements suggested by verification, and the final equivalence result; example of artefacts are deferred to the appendix with technical listings (\autoref{sec:appendix_listings}).
\subsection*{Verification Relations and Their Interpretation}

Once both the attack tree and the detection rules are translated into LNT, we
compare their labelled transition systems using equivalence or inclusion
relations provided by CADP. These relations have distinct interpretations:

\begin{itemize}
  \item \textbf{Strong bisimulation} requires an exact stepwise correspondence
  between the attack model and the detection logic. This is often too strict in
  practice, since it forces every internal step of the detection engine to
  match the threat model.

  \item \textbf{Weak (observational) bisimulation} abstracts away from silent
  actions ($\tau$), which in our encoding correspond to assignments and
  condition evaluations. \\
  Only detection actions
  (\texttt{GlobalFlag.Set("D")}) remain observable.

  \item \textbf{Trace equivalence} requires equality of observable traces:
  the detection logic accepts exactly the same traces as the attack tree, no
  more and no less.

  \item \textbf{Trace inclusion} requires that every adversarial behavior
  described by the attack tree is covered by some detection trace. The
  detection may still over-approximate by producing additional traces, which is
  acceptable from a security standpoint.
\end{itemize}

\paragraph*{Interpretation.}
If trace inclusion holds, then all adversarial behaviors specified in the
attack tree are guaranteed to be detected. If equivalence (or bisimulation)
holds, the detection matches the model exactly, with no false negatives and no
over-approximation. If even inclusion fails, then some threats described by the
attack tree are \emph{not} covered by the detection logic. Silent actions are
chosen precisely to model internal bookkeeping in GTDL (assignments, variable
updates), ensuring that verification focuses only on externally observable
detections.

\paragraph{Tool chain.}
The tool\cite{mindthegap} provides an integrated set of scripts to automate the verification workflow. The script \texttt{tree2lnt.py} translates YAML-specified attack trees (\texttt{AND}, \texttt{OR}, \texttt{SAND}) into LNT processes, while \texttt{gtdl2lnt.py} converts corresponding GTDL detection rules into LNT monitors. The \texttt{verify.sh} script compiles both models with CADP, performs LTS minimization, and invokes the \texttt{bisimulator} tool to decide conformance using strong, observational, or weak-trace equivalence. To support scalability analysis, \texttt{measure\_times.py} iterates over all \texttt{benchmark\_*} folders, runs the verification pipeline, and records timing data as shown in Table~\ref{tab:execution_times}. Additionally, \texttt{parametric/generator.py} can be used to generate synthetic attack trees for large-scale experiments. The toolchain requires only Python 3.9 and CADP 2024-x/2025-x for operation.

\subsection{Case Study: Detection and Verification Across Attack Campaigns}
The following table provides a comparative overview of representative malware cases examined in this work, summarizing both their technical behaviors and the corresponding detection strategies. For each case, we highlight the attack narrative, key steps or tactics leveraged by adversaries, and the detection rules or correlation logic employed. Where available, we also note how formal verification methods (based on CADP's Bisimulator counterexamples) shaped detection coverage and rule refinement. This structured summary facilitates rapid cross-case comparison and illustrates the application of formal methods to practical detection engineering.

\begin{table}[htbp]
\centering
\renewcommand{\arraystretch}{1.2}
\small
\begin{tabular}{|p{2.0cm}|p{3.2cm}|p{4.5cm}|p{4.8cm}|}
\hline
\textbf{Case} & \textbf{Attack/Narrative} & \textbf{Key Steps/Behavior} & \textbf{Detection Logic \& Verification} \\
\hline
\textbf{LokiBot} \cite{lokiBot} 
& Infostealer via phishing attachments
& Renamed process in \%TEMP\%, auxiliary payloads, credential exfiltration
& Correlates four partial signatures: process name, malicious extension, run-key, C\&C; An AND-structured attack tree. \\

\hline
\textbf{Emotet} \cite{dfir_emotet}
& Loader for multi-stage intrusions, ending in Quantum ransomware
& Steps: some strictly ordered (lateral movement), some parallel (discovery)
& Mix of AND/SAND operators; iterative verification expanded ruleset to over-approximate attack. \\

\hline
\textbf{BlackCat (ALPHV)} \cite{mandiant_blackcat22,sophos_blackcat22,ncc_blackcat23,dfir_blackcat24}
& Double-extortion ransomware; rapid cross-platform attacks in Rust
& VPN credential reuse, Kerberos ticket forging, payload drop, shadow copy deletion, encryption (delayed)
& Five uncorrelated EDR flags (VPN anomaly, private hash, payload drop, Rust process, shadow delete); CADP highlighted coverage gaps; formal verification and improved rule set. \\

\hline
\textbf{Cyber Av3ngers} \cite{cisa_ics24,dragos_avengers23}
& ICS/OT campaign targeting PLCs
& Scan PLCs, unauthorized access, alarm disable, destructive shutdown
& Independent detection signatures mapped to alarm messages and PLC traffic; behavior-mapped signatures. \\

\hline
\textbf{Gamaredon} \cite{unit42_gamaredon22,eset_gamaredon24}
& Rapid-fire phishing ops with macros and staged payloads
& Delivery doc, macro execution, staged PowerShell downloaders, C2 beacon
& Partial signature tracking per stage; formal verification confirmed equivalence, improved macro heuristic. \\

\hline
\textbf{IcedID} \cite{proofpoint_icedid24,ibm_icedid23}
& Loader via macro-enabled \texttt{.xlsm} attachments
& DLL injection and C2 beacon, order can vary
& Attack tree: AND; detection uses disjunction (process/file) + conjunction (C2 domain); fits unordered steps. \\

\hline
\textbf{QakBot (QBot)} \cite{cisa_qakbot24,talos_qakbot24}
& Resurgent loader/macro phishing
& Macro document delivery, execution, DLL injection, credential dump, beaconing
& Five-flag correlation for alert; verified by equivalence checks (observational/weak trace). \\
\hline
\textbf{TraderTraitor} \cite{cisa_trader22,rf_trader23}
& Supply-chain attacks on crypto traders
& Installer drop-in, persistence via scheduled tasks, updater download, remote shell
& AND root: all independent leaves; reuses EDR IOC matches, no extra instrumentation. \\
\hline
\end{tabular}
\caption{Summary of detection and verification approaches for selected malware cases.}
\label{tab:malware_cases}
\end{table}

\subsubsection{Parametric models}
In this section we address the scalability of our methodology for verifying conformance of attack tree and detection with respect tothe number of leaves/signatures, the type of operators (AND, OR, SAND), and the layers (nodes) in attack tree.
Our test methodology for measuring the scale of generated BCG \footnote{BCG (Binary Coded Graphs) is a compact storage format used in CADP for efficiently representing Labeled Transition Systems (LTSs). It enables scalable storage, manipulation, and analysis of large state spaces. BCG integrates with CADP tools like \texttt{bcg\_min} for LTS minimization and \texttt{bisimulator} for bisimulation checking. By supporting strong, weak, and branching bisimulations, it ensures efficient system verification. Its optimized algorithms facilitate fast computation, making it essential for model checking and equivalence verification in formal methods.}for each model is the following. 
For attack trees we use as parameter the number of leaves considering a different action for each leaf. 
Then we generated the event definitions, the process for each leaf and the combining node (root) where the operator construction was used.
In terms of detection we used the same approach but the differences can be found in the interpretation of the model.
\begin{figure}[ht]
	\centering
	\includegraphics[scale=0.5]{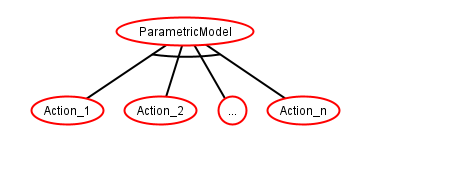}
    \vspace*{-35pt}    
	\caption{Parametric attack tree}
	\label{fig:paramAT}
     \vspace*{-15pt}  
\end{figure}
A model for detection for an attack tree with 3 leaves is as follows:
\begin{flushleft}
\small
\begin{verbatim}
process Engine [detectionFlag1,
    detectionFlag2,detectionFlag3:FLAG_CHANNEL, 
    finalDetection:any] is
par
SignatureDetectionFlag1[detectionFlag1] ||
SignatureDetectionFlag2[detectionFlag2] ||
SignatureDetectionFlag3[detectionFlag3] 
end par	
end process
\end{verbatim}
\end{flushleft}
The case studies are described further.
We analyze the execution time of equivalence verification under different composition schemes: AND-only, SAND-only, OR-only, AND-OR, and AND-SAND.  

For the \textbf{AND-only} case, models are constructed using only parallel composition. The simplest case includes leaves/signatures that emit observable actions without intermediary computations. The model size grows exponentially, making execution time crucial. Our results show that weak trace is significantly faster due to the deterministic transition system, avoiding exponential operations. Memory usage is lower for weak trace, with a smaller variable size (16 bytes vs. 24 bytes), fewer visited variables, and a more efficient DFS stack size of \( n+1 \) (compared to \( n+2 \) for observational). This suggests weak trace is preferable for verifying detection models against attack specifications in large systems.  
For the \textbf{non-deterministic AND-only} case (where events appear in two leaves), execution time increases but weak trace remains faster.  

For the \textbf{SAND-only} case, models use sequential action modeling. The model size and execution time increase linearly, with no significant differences between the observational and weak trace methods.  
The \textbf{OR-only} case is modeled via non-deterministic choice, leading to a linear increase in model size. Execution times are similar for both methods, making them interchangeable.  

The \textbf{AND-OR} composition includes both parallel (AND) and non-deterministic (OR) components. We kept two intermediary OR nodes with three leaves and computed weaktrace inclusion similarly.  
For the \textbf{AND-SAND} case, models combine parallelism (AND) with sequential constraints (SAND). Instead of observational bisimulation, we used pre-order simulation, which verifies that detection models cover the attack tree. We maintained two SAND nodes with three leaves while computing weaktrace inclusion.  

Table \ref{tab:execution_times} summarizes execution times for all cases. Weaktrace remains consistently faster, particularly in larger models, making it a preferred option for verifying detection compliance.  

\begin{table}[ht]
\centering
\setlength{\tabcolsep}{4pt} 
\caption{Equivalence execution time (seconds) for different cases}
\label{tab:execution_times}
\begin{tabular}{|c||c|c||c|c||c|c||c|c||c|c|}
 \hline
 \# lvs. & \multicolumn{2}{c||}{AND-only} & \multicolumn{2}{c||}{AND-non-det.} & \multicolumn{2}{c||}{SAND-only} & \multicolumn{2}{c||}{AND-SAND} & \multicolumn{2}{c|}{AND-OR} \\
 \cline{2-11}
  & Obs. & Wktrc. & Obs. & Wktrc. & Obs. & Wktrc. & Obs. & Wktrc. & Obs. & Wktrc. \\
 \hline
 1 & 0.211  & 0.197 & 0.284 & 0.283 & 0.328 & 0.232 & - & - & - & - \\
 2 & 0.199 & 0.206 & 0.285 & 0.280 & 0.318 & 0.199 & - & - & - & - \\
 10 & 0.595 & 0.204 & 4.767 & 0.789 & 0.325 & 0.203 & - & - & - & - \\
 11 & - & - & - & - & - & - & 0.519 & 0.223 & 0.571 & 0.221 \\
 14 & - & - & -  & - & - & - & 1.90 & 0.258 & 2.11 & 0.271 \\
 15 & 14.351 & 0.575 & 1130.12 & 15.353 & 0.320 & 0.206  & 3.555 & 0.312 & 4.008 & 0.295 \\
 17 & -  & -  & - & - & - & - & 13.114  & 0.657 & 15.430  & 0.750   \\
 19 & 252.80 & 9.158 & $>$7200 & 271.92 & 0.338 & 0.199  & 52.449  & 2.216 & 64.555 & 1.976 \\
 \hline
\end{tabular}
\end{table}
Our approach is focused on validating detection completeness with respect to explicit, modeled attack scenarios. It is not designed for zero-day attacks or for adversarial behaviors that do not manifest in the threat model, such as chains involving only benign system actions. Integration with anomaly-based detection or dynamic threat modeling could enhance resilience against such evasive techniques, but this remains outside the current scope.

\section{Conclusion}
We presented a formal verification approach that bridges attack modeling and threat detection by integrating attack trees, GTDL, and CADP/LNT. Using bisimulation and weak trace inclusion, our methodology validates the conformance of detection rules to threat models.
While our approach systematically validates detection coverage, 
its guarantees remain bounded by the quality of the underlying attack trees. 
Constructing such models is resource-intensive and requires domain expertise. 
Thus, formal verification cannot compensate for incomplete or inaccurate 
threat modeling, though it can highlight coverage gaps once models are provided.

Case studies on real-world malware (LokiBot, Emotet) show improved precision and automated coverage analysis, while parametric benchmarks confirm the scalability of weak trace inclusion.

Our approach is architecture-agnostic and can be integrated into SIEM, XDR, or IDS platforms. Its main limitation lies in the availability and quality of attack trees, which remain difficult to construct and often require expert effort. Nevertheless, automating the verification pipeline offers a practical way to refine detection rules before deployment, improving accuracy and coverage while acknowledging that assurance ultimately depends on the fidelity of the underlying threat models.

\bibliographystyle{eptcs}
\bibliography{biblio}

\begin{thebibliography}{10}
\providecommand{\bibitemdeclare}[2]{}
\providecommand{\surnamestart}{}
\providecommand{\surnameend}{}
\providecommand{\urlprefix}{Available at }
\providecommand{\url}[1]{\texttt{#1}}
\providecommand{\href}[2]{\texttt{#2}}
\providecommand{\urlalt}[2]{\href{#1}{#2}}
\providecommand{\doi}[1]{doi:\urlalt{https://doi.org/#1}{#1}}
\providecommand{\eprint}[1]{arXiv:\urlalt{https://arxiv.org/abs/#1}{#1}}
\providecommand{\bibinfo}[2]{#2}

\bibitemdeclare{inproceedings}{pinchinat}
\bibitem{pinchinat}
\bibinfo{author}{Maxime \surnamestart Audinot\surnameend},
  \bibinfo{author}{Sophie \surnamestart Pinchinat\surnameend} \&
  \bibinfo{author}{Barbara \surnamestart Kordy\surnameend}
  (\bibinfo{year}{2018}): \emph{\bibinfo{title}{Guided Design of Attack Trees:
  {A} System-Based Approach}}.
\newblock In: {\slshape \bibinfo{booktitle}{Proceedings of {CSF}'2018}},
  \bibinfo{publisher}{{IEEE} Computer Society}, pp. \bibinfo{pages}{61--75},
  \doi{10.1109/CSF.2018.00012}.

\bibitemdeclare{book}{processalgebrabook}
\bibitem{processalgebrabook}
\bibinfo{author}{Jos C.~M. \surnamestart Baeten\surnameend} \&
  \bibinfo{author}{W.~P. \surnamestart Weijland\surnameend}
  (\bibinfo{year}{1990}): \emph{\bibinfo{title}{Process Algebra}}.
\newblock {\slshape \bibinfo{series}{Cambridge Tracts in Theoretical Computer
  Science}}~\bibinfo{volume}{18}, \bibinfo{publisher}{Cambridge University
  Press}, \doi{10.1017/CBO9780511624193}.

\bibitemdeclare{book}{mcbook}
\bibitem{mcbook}
\bibinfo{author}{Christel \surnamestart Baier\surnameend} \&
  \bibinfo{author}{Joost{-}Pieter \surnamestart Katoen\surnameend}
  (\bibinfo{year}{2008}): \emph{\bibinfo{title}{Principles of model checking}}.
\newblock \bibinfo{publisher}{{MIT} Press}, \doi{10.5555/1373322}.

\bibitemdeclare{techreport}{stix}
\bibitem{stix}
\bibinfo{author}{Sean \surnamestart Barnum\surnameend} (\bibinfo{year}{2012}):
  \emph{\bibinfo{title}{Standardizing Cyber Threat Intelligence Information
  with the Structured Threat Information eXpression (STIX)}}.
\newblock \bibinfo{type}{Technical Report}, \bibinfo{institution}{The MITRE
  Corporation}, \bibinfo{address}{McLean, VA, USA}.
\newblock
  \urlprefix\url{https://www.mitre.org/sites/default/files/publications/stix.pdf}.
\newblock \bibinfo{note}{White paper, accessed 30~Apr~2025}.

\bibitemdeclare{inproceedings}{beaucamps}
\bibitem{beaucamps}
\bibinfo{author}{Philippe \surnamestart Beaucamps\surnameend},
  \bibinfo{author}{Isabelle \surnamestart Gnaedig\surnameend} \&
  \bibinfo{author}{Jean{-}Yves \surnamestart Marion\surnameend}
  (\bibinfo{year}{2012}): \emph{\bibinfo{title}{Abstraction-Based Malware
  Analysis Using Rewriting and Model Checking}}.
\newblock In: {\slshape \bibinfo{booktitle}{Proceedings of {ESORICS}'2012}},
  pp. \bibinfo{pages}{806--823}, \doi{10.1007/978-3-642-33167-1\_46}.

\bibitemdeclare{inproceedings}{bisimulator}
\bibitem{bisimulator}
\bibinfo{author}{Damien \surnamestart Bergamini\surnameend},
  \bibinfo{author}{Nicolas \surnamestart Descoubes\surnameend},
  \bibinfo{author}{Christophe \surnamestart Joubert\surnameend} \&
  \bibinfo{author}{Radu \surnamestart Mateescu\surnameend}
  (\bibinfo{year}{2005}): \emph{\bibinfo{title}{{BISIMULATOR:} {A} Modular Tool
  for On-the-Fly Equivalence Checking}}.
\newblock In \bibinfo{editor}{Nicolas \surnamestart Halbwachs\surnameend} \&
  \bibinfo{editor}{Lenore~D. \surnamestart Zuck\surnameend}, editors: {\slshape
  \bibinfo{booktitle}{Proceedings of {TACAS}'2005}}, {\slshape
  \bibinfo{series}{Lecture Notes in Computer Science}} \bibinfo{volume}{3440},
  \bibinfo{publisher}{Springer}, pp. \bibinfo{pages}{581--585},
  \doi{10.1007/978-3-540-31980-1\_42}.

\bibitemdeclare{misc}{bd}
\bibitem{bd}
\emph{\bibinfo{title}{Bitdefender}}.
\newblock \urlprefix\url{http://www.bitdefender.com}.

\bibitemdeclare{inproceedings}{bunte2019mcrl2}
\bibitem{bunte2019mcrl2}
\bibinfo{author}{Oliver \surnamestart Bunte\surnameend},
  \bibinfo{author}{Jan~Friso \surnamestart Groote\surnameend},
  \bibinfo{author}{Jeroen J~A \surnamestart Keiren\surnameend},
  \bibinfo{author}{Marc \surnamestart Laveaux\surnameend},
  \bibinfo{author}{Thomas \surnamestart Neele\surnameend},
  \bibinfo{author}{Erik~P \surnamestart de~Vink\surnameend},
  \bibinfo{author}{Jan~W \surnamestart Wesselink\surnameend},
  \bibinfo{author}{Anton~J \surnamestart Wijs\surnameend} \&
  \bibinfo{author}{Tim A~C \surnamestart Willemse\surnameend}
  (\bibinfo{year}{2019}): \emph{\bibinfo{title}{The mCRL2 Toolset for Analysing
  Concurrent Systems: Improvements in Expressivity and Usability}}.
\newblock In: {\slshape \bibinfo{booktitle}{International Conference on Tools
  and Algorithms for the Construction and Analysis of Systems}}, {\slshape
  \bibinfo{series}{Lecture Notes in Computer Science}} \bibinfo{volume}{11427},
  \bibinfo{organization}{Springer}, pp. \bibinfo{pages}{21--39},
  \doi{10.1007/978-3-030-17465-1\_2}.

\bibitemdeclare{inproceedings}{cristodorescu}
\bibitem{cristodorescu}
\bibinfo{author}{Mihai \surnamestart Christodorescu\surnameend},
  \bibinfo{author}{Somesh \surnamestart Jha\surnameend} \&
  \bibinfo{author}{Christopher \surnamestart Kruegel\surnameend}
  (\bibinfo{year}{2008}): \emph{\bibinfo{title}{Mining specifications of
  malicious behavior}}.
\newblock In: {\slshape \bibinfo{booktitle}{Proceeding of {ISEC}'2008}}, pp.
  \bibinfo{pages}{5--14}, \doi{10.1145/1342211.1342215}.

\bibitemdeclare{inproceedings}{androidevolution}
\bibitem{androidevolution}
\bibinfo{author}{Aniello \surnamestart Cimitile\surnameend},
  \bibinfo{author}{Fabio \surnamestart Martinelli\surnameend},
  \bibinfo{author}{Francesco \surnamestart Mercaldo\surnameend},
  \bibinfo{author}{Vittoria \surnamestart Nardone\surnameend},
  \bibinfo{author}{Antonella \surnamestart Santone\surnameend} \&
  \bibinfo{author}{Gigliola \surnamestart Vaglini\surnameend}
  (\bibinfo{year}{2017}): \emph{\bibinfo{title}{Model Checking for Mobile
  Android Malware Evolution}}.
\newblock In: {\slshape \bibinfo{booktitle}{Proceedings of {FormaliSE@ICSE}
  2017}}, \bibinfo{publisher}{{IEEE}}, pp. \bibinfo{pages}{24--30},
  \doi{10.1109/FormaliSE.2017.4}.

\bibitemdeclare{misc}{lokiBot}
\bibitem{lokiBot}
\bibinfo{author}{\surnamestart CISA.gov\surnameend} (\bibinfo{year}{2020}):
  \emph{\bibinfo{title}{{LokiBot malware}}}.
\newblock
  \bibinfo{howpublished}{\url{https://www.cisa.gov/news-events/cybersecurity-advisories/aa20-266a}}.
\newblock \bibinfo{note}{[Online; accessed 27-05-2023]}.

\bibitemdeclare{misc}{talos_qakbot24}
\bibitem{talos_qakbot24}
\bibinfo{author}{\surnamestart {Cisco Talos}\surnameend}
  (\bibinfo{year}{2024}): \emph{\bibinfo{title}{{Qakbot-Affiliated Actors
  Distribute Ransom Knight Malware Despite Takedown}}}.
\newblock
  \bibinfo{howpublished}{\url{https://blog.talosintelligence.com/qakbot-affiliated-actors-distribute-ransom/}}.
\newblock \bibinfo{note}{Accessed 2025-04-29}.

\bibitemdeclare{article}{cleaveland1993concurrency}
\bibitem{cleaveland1993concurrency}
\bibinfo{author}{Rance \surnamestart Cleaveland\surnameend},
  \bibinfo{author}{Joachim \surnamestart Parrow\surnameend} \&
  \bibinfo{author}{Bernhard \surnamestart Steffen\surnameend}
  (\bibinfo{year}{1993}): \emph{\bibinfo{title}{The Concurrency Workbench: A
  Semantics-Based Tool for the Verification of Concurrent Systems}}.
\newblock {\slshape \bibinfo{journal}{ACM Transactions on Programming Languages
  and Systems (TOPLAS)}} \bibinfo{volume}{15}(\bibinfo{number}{1}), pp.
  \bibinfo{pages}{36--72}, \doi{10.1145/151646.151648}.

\bibitemdeclare{misc}{cisa_trader22}
\bibitem{cisa_trader22}
\bibinfo{author}{\surnamestart {Cybersecurity \& Infrastructure Security
  Agency}\surnameend} (\bibinfo{year}{2022}):
  \emph{\bibinfo{title}{{TraderTraitor: North Korean State-Sponsored APT
  Targets Blockchain Companies}}}.
\newblock
  \bibinfo{howpublished}{\url{https://www.cisa.gov/news-events/cybersecurity-advisories/aa22-108a}}.
\newblock \bibinfo{note}{Alert AA22-108A, accessed 2025-04-29}.

\bibitemdeclare{misc}{cisa_ics24}
\bibitem{cisa_ics24}
\bibinfo{author}{\surnamestart {Cybersecurity \& Infrastructure Security
  Agency}\surnameend} (\bibinfo{year}{2024}):
  \emph{\bibinfo{title}{{IRGC-Affiliated Cyber Actors Exploit PLCs in Multiple
  Sectors, Including US Water and Wastewater Systems Facilities}}}.
\newblock
  \bibinfo{howpublished}{\url{https://www.cisa.gov/news-events/cybersecurity-advisories/aa23-335a}}.
\newblock \bibinfo{note}{Alert AA23-335A, accessed 2025-04-29}.

\bibitemdeclare{misc}{dragos_avengers23}
\bibitem{dragos_avengers23}
\bibinfo{author}{\surnamestart {Dragos Inc.}\surnameend}
  (\bibinfo{year}{2023}): \emph{\bibinfo{title}{{Threat Intelligence Brief –
  Cyber Av3ngers}}}.
\newblock
  \bibinfo{howpublished}{\url{https://www.dragos.com/blog/industry-news/cyber-av3ngers/}}.
\newblock \bibinfo{note}{Accessed 2025-04-29}.

\bibitemdeclare{misc}{eset_gamaredon24}
\bibitem{eset_gamaredon24}
\bibinfo{author}{\surnamestart {ESET Research}\surnameend}
  (\bibinfo{year}{2024}): \emph{\bibinfo{title}{{Cyberespionage the Gamaredon
  Way: Analysis of Toolset Used to Spy on Ukraine in 2022–2023}}}.
\newblock
  \bibinfo{howpublished}{\url{https://www.welivesecurity.com/en/eset-research/cyberespionage-gamaredon-way-analysis-toolset-used-spy-ukraine-2022-2023/}}.
\newblock \bibinfo{note}{Accessed 2025-04-29}.

\bibitemdeclare{article}{sequential}
\bibitem{sequential}
\bibinfo{author}{Hubert \surnamestart Garavel\surnameend}
  (\bibinfo{year}{2015}): \emph{\bibinfo{title}{Revisiting sequential
  composition in process calculi}}.
\newblock {\slshape \bibinfo{journal}{J. Log. Algebraic Methods Program.}}
  \bibinfo{volume}{84}(\bibinfo{number}{6}), pp. \bibinfo{pages}{742--762},
  \doi{10.1016/J.JLAMP.2015.08.001}.

\bibitemdeclare{inproceedings}{garavel2011cadp}
\bibitem{garavel2011cadp}
\bibinfo{author}{Hubert \surnamestart Garavel\surnameend},
  \bibinfo{author}{Fr{\'{e}}d{\'{e}}ric \surnamestart Lang\surnameend},
  \bibinfo{author}{Radu \surnamestart Mateescu\surnameend} \&
  \bibinfo{author}{Wendelin \surnamestart Serwe\surnameend}
  (\bibinfo{year}{2011}): \emph{\bibinfo{title}{{CADP} 2010: {A} Toolbox for
  the Construction and Analysis of Distributed Processes}}.
\newblock In \bibinfo{editor}{Parosh~Aziz \surnamestart Abdulla\surnameend} \&
  \bibinfo{editor}{K.~Rustan~M. \surnamestart Leino\surnameend}, editors:
  {\slshape \bibinfo{booktitle}{Proceedings of {TACAS} 2011}}, {\slshape
  \bibinfo{series}{Lecture Notes in Computer Science}} \bibinfo{volume}{6605},
  \bibinfo{publisher}{Springer}, pp. \bibinfo{pages}{372--387},
  \doi{10.1007/978-3-642-19835-9\_33}.

\bibitemdeclare{inproceedings}{lnt}
\bibitem{lnt}
\bibinfo{author}{Hubert \surnamestart Garavel\surnameend},
  \bibinfo{author}{Fr{\'{e}}d{\'{e}}ric \surnamestart Lang\surnameend} \&
  \bibinfo{author}{Wendelin \surnamestart Serwe\surnameend}
  (\bibinfo{year}{2017}): \emph{\bibinfo{title}{From {LOTOS} to {LNT}}}.
\newblock In \bibinfo{editor}{Joost{-}Pieter \surnamestart Katoen\surnameend},
  \bibinfo{editor}{Rom \surnamestart Langerak\surnameend} \&
  \bibinfo{editor}{Arend \surnamestart Rensink\surnameend}, editors: {\slshape
  \bibinfo{booktitle}{ModelEd, TestEd, TrustEd - Essays Dedicated to Ed
  Brinksma on the Occasion of His 60th Birthday}}, {\slshape
  \bibinfo{series}{Lecture Notes in Computer Science}} \bibinfo{volume}{10500},
  \bibinfo{publisher}{Springer}, pp. \bibinfo{pages}{3--26},
  \doi{10.1007/978-3-319-68270-9\_1}.

\bibitemdeclare{article}{gibson2014fdr}
\bibitem{gibson2014fdr}
\bibinfo{author}{Thomas \surnamestart Gibson-Robinson\surnameend},
  \bibinfo{author}{Philip \surnamestart Armstrong\surnameend},
  \bibinfo{author}{Alexandre \surnamestart Boulgakov\surnameend} \&
  \bibinfo{author}{A~W \surnamestart Roscoe\surnameend} (\bibinfo{year}{2014}):
  \emph{\bibinfo{title}{FDR3: A Modern Refinement Checker for CSP}}.
\newblock {\slshape \bibinfo{journal}{International Journal on Software Tools
  for Technology Transfer}} \bibinfo{volume}{16}(\bibinfo{number}{2}), pp.
  \bibinfo{pages}{215--228}, \doi{10.1007/s10009-015-0377-y}.

\bibitemdeclare{book}{hopcroft}
\bibitem{hopcroft}
\bibinfo{author}{John~E. \surnamestart Hopcroft\surnameend},
  \bibinfo{author}{Rajeev \surnamestart Motwani\surnameend} \&
  \bibinfo{author}{Jeffrey~D. \surnamestart Ullman\surnameend}
  (\bibinfo{year}{2001}): \emph{\bibinfo{title}{Introduction to automata
  theory, languages, and computation, 2nd edition}}.
\newblock \bibinfo{volume}{32}, \bibinfo{publisher}{Association for Computing
  Machinery}, \bibinfo{address}{New York, NY, USA},
  \doi{10.1145/568438.568455}.

\bibitemdeclare{misc}{ibm_icedid23}
\bibitem{ibm_icedid23}
\bibinfo{author}{\surnamestart {IBM Security X-Force}\surnameend}
  (\bibinfo{year}{2023}): \emph{\bibinfo{title}{{IcedID Malware Variants:
  Inside the Forked Loader}}}.
\newblock
  \bibinfo{howpublished}{\url{https://securityintelligence.com/posts/icedid-malware-forked-loader-analysis/}}.
\newblock \bibinfo{note}{Accessed 2025-04-29}.

\bibitemdeclare{inproceedings}{jhawar2015stochastic}
\bibitem{jhawar2015stochastic}
\bibinfo{author}{Ravi \surnamestart Jhawar\surnameend}, \bibinfo{author}{Karim
  \surnamestart Lounis\surnameend} \& \bibinfo{author}{Sjouke \surnamestart
  Mauw\surnameend} (\bibinfo{year}{2016}): \emph{\bibinfo{title}{A Stochastic
  Framework for Quantitative Analysis of Attack-Defense Trees}}.
\newblock In \bibinfo{editor}{Gilles \surnamestart Barthe\surnameend},
  \bibinfo{editor}{Evangelos~P. \surnamestart Markatos\surnameend} \&
  \bibinfo{editor}{Pierangela \surnamestart Samarati\surnameend}, editors:
  {\slshape \bibinfo{booktitle}{Proceedings of {STM}'2016}}, {\slshape
  \bibinfo{series}{Lecture Notes in Computer Science}} \bibinfo{volume}{9871},
  \bibinfo{publisher}{Springer}, pp. \bibinfo{pages}{138--153},
  \doi{10.1007/978-3-319-46598-2\_10}.

\bibitemdeclare{article}{kulik2021survey}
\bibitem{kulik2021survey}
\bibinfo{author}{Tomas \surnamestart Kulik\surnameend},
  \bibinfo{author}{Brijesh \surnamestart Dongol\surnameend},
  \bibinfo{author}{Peter~Gorm \surnamestart Larsen\surnameend},
  \bibinfo{author}{Hugo~Daniel \surnamestart Macedo\surnameend},
  \bibinfo{author}{Steve \surnamestart Schneider\surnameend},
  \bibinfo{author}{Peter W{\"{u}}rtz~Vinther \surnamestart
  Tran{-}J{\o}rgensen\surnameend} \& \bibinfo{author}{Jim \surnamestart
  Woodcock\surnameend} (\bibinfo{year}{2021}): \emph{\bibinfo{title}{A Survey
  of Practical Formal Methods for Security}}.
\newblock {\slshape \bibinfo{journal}{CoRR}} \bibinfo{volume}{abs/2109.01362},
  \doi{10.48550/arXiv.2109.01362}.
\newblock \urlprefix\url{https://arxiv.org/abs/2109.01362}.

\bibitemdeclare{misc}{mandiant_blackcat22}
\bibitem{mandiant_blackcat22}
\bibinfo{author}{\surnamestart {Mandiant Threat Intelligence}\surnameend}
  (\bibinfo{year}{2022}): \emph{\bibinfo{title}{{ALPHV (BlackCat): A
  Rust‐based Ransomware in the Wild}}}.
\newblock
  \bibinfo{howpublished}{\url{https://www.mandiant.com/resources/blackcat-rust-ransomware}}.
\newblock \bibinfo{note}{Accessed 2025-04-29}.

\bibitemdeclare{misc}{mindthegap}
\bibitem{mindthegap}
\emph{\bibinfo{title}{Mind The Gap}}.
\newblock \urlprefix\url{https://github.com/bprelipcean/MindTheGap}.

\bibitemdeclare{misc}{ncc_blackcat23}
\bibitem{ncc_blackcat23}
\bibinfo{author}{\surnamestart {NCC Group Global Threat
  Intelligence}\surnameend} (\bibinfo{year}{2023}):
  \emph{\bibinfo{title}{{BlackCat Ransomware: A Technical Analysis}}}.
\newblock
  \bibinfo{howpublished}{\url{https://research.nccgroup.com/2023/02/13/blackcat-ransomware-technical-analysis/}}.
\newblock \bibinfo{note}{Accessed 2025-04-29}.

\bibitemdeclare{misc}{unit42_gamaredon22}
\bibitem{unit42_gamaredon22}
\bibinfo{author}{\surnamestart {Palo Alto Networks Unit 42}\surnameend}
  (\bibinfo{year}{2022}): \emph{\bibinfo{title}{{Trident Ursa (Gamaredon):
  Fast-Flux Domains and Accelerated Phishing}}}.
\newblock
  \bibinfo{howpublished}{\url{https://unit42.paloaltonetworks.com/trident-ursa/}}.
\newblock \bibinfo{note}{Accessed 2025-04-29}.

\bibitemdeclare{inproceedings}{pietre}
\bibitem{pietre}
\bibinfo{author}{Ludovic \surnamestart
  Pi{\`{e}}tre{-}Cambac{\'{e}}d{\`{e}}s\surnameend} \& \bibinfo{author}{Marc
  \surnamestart Bouissou\surnameend} (\bibinfo{year}{2010}):
  \emph{\bibinfo{title}{Beyond Attack Trees: Dynamic Security Modeling with
  Boolean Logic Driven Markov Processes {(BDMP)}}}.
\newblock In: {\slshape \bibinfo{booktitle}{Eighth European Dependable
  Computing Conference, {EDCC-8} 2010, Valencia, Spain, 28-30 April 2010}},
  \bibinfo{publisher}{{IEEE} Computer Society}, pp. \bibinfo{pages}{199--208},
  \doi{10.1109/EDCC.2010.32}.

\bibitemdeclare{misc}{proofpoint_icedid24}
\bibitem{proofpoint_icedid24}
\bibinfo{author}{\surnamestart {Proofpoint Threat Research}\surnameend}
  (\bibinfo{year}{2023}): \emph{\bibinfo{title}{{Security Brief: TA571 Delivers
  IcedID Forked Loader}}}.
\newblock
  \bibinfo{howpublished}{\url{https://www.proofpoint.com/us/blog/threat-insight/security-brief-ta571-delivers-icedid-forked-loader}}.
\newblock \bibinfo{note}{Accessed 2025-04-29}.

\bibitemdeclare{misc}{rf_trader23}
\bibitem{rf_trader23}
\bibinfo{author}{\surnamestart {Recorded Future Insikt Group}\surnameend}
  (\bibinfo{year}{2023}): \emph{\bibinfo{title}{{TraderTraitor Supply-Chain
  Attacks Targeting Cryptocurrency Firms}}}.
\newblock
  \bibinfo{howpublished}{\url{https://www.recordedfuture.com/tradertraitor-supply-chain-attacks}}.
\newblock \bibinfo{note}{Accessed 2025-04-29}.

\bibitemdeclare{inproceedings}{schmittner2021model}
\bibitem{schmittner2021model}
\bibinfo{author}{Christoph \surnamestart Schmittner\surnameend},
  \bibinfo{author}{Zhendong \surnamestart Ma\surnameend} \&
  \bibinfo{author}{Erwin \surnamestart Schoitsch\surnameend}
  (\bibinfo{year}{2021}): \emph{\bibinfo{title}{Model-based Attack Tree
  Generation for Cybersecurity Risk-Assessments}}.
\newblock In: {\slshape \bibinfo{booktitle}{2021 IEEE International Conference
  on Software Quality, Reliability and Security Companion (QRS-C)}},
  \bibinfo{organization}{IEEE}, pp. \bibinfo{pages}{320--327},
  \doi{10.1109/ISSE51541.2021.9582462}.

\bibitemdeclare{misc}{schneier1999attack}
\bibitem{schneier1999attack}
\bibinfo{author}{Bruce \surnamestart Schneier\surnameend}
  (\bibinfo{year}{1999}): \emph{\bibinfo{title}{Attack trees: Modeling security
  threats}}.

\bibitemdeclare{misc}{sigma}
\bibitem{sigma}
 (\bibinfo{year}{2025}): \emph{\bibinfo{title}{{Sigma}}}.
\newblock \bibinfo{howpublished}{\url{https://github.com/Neo23x0/sigma}}.
\newblock \bibinfo{note}{[Online; accessed 18-02-2025]}.

\bibitemdeclare{inproceedings}{song2012efficient}
\bibitem{song2012efficient}
\bibinfo{author}{Fu~\surnamestart Song\surnameend} \& \bibinfo{author}{Tayssir
  \surnamestart Touili\surnameend} (\bibinfo{year}{2012}):
  \emph{\bibinfo{title}{Efficient Malware Detection Using Model-Checking}}.
\newblock In: {\slshape \bibinfo{booktitle}{International Symposium on Formal
  Methods}}, \bibinfo{organization}{Springer}, pp. \bibinfo{pages}{418--433},
  \doi{10.1007/978-3-642-32759-9\_34}.
\newblock
  \urlprefix\url{https://link.springer.com/chapter/10.1007/978-3-642-32759-9\_34}.

\bibitemdeclare{misc}{sophos_blackcat22}
\bibitem{sophos_blackcat22}
\bibinfo{author}{\surnamestart {Sophos X-Ops}\surnameend}
  (\bibinfo{year}{2022}): \emph{\bibinfo{title}{{BlackCat/ALPHV: Rust Never
  Sleeps}}}.
\newblock
  \bibinfo{howpublished}{\url{https://news.sophos.com/en-us/2022/03/07/blackcat-alphv-rust-never-sleeps/}}.
\newblock \bibinfo{note}{Accessed 2025-04-29}.

\bibitemdeclare{techreport}{mitre}
\bibitem{mitre}
\bibinfo{author}{Blake~E. \surnamestart Strom\surnameend},
  \bibinfo{author}{Andy \surnamestart Applebaum\surnameend},
  \bibinfo{author}{Doug~P. \surnamestart Miller\surnameend},
  \bibinfo{author}{Kathryn~C. \surnamestart Nickels\surnameend},
  \bibinfo{author}{Adam~G. \surnamestart Pennington\surnameend} \&
  \bibinfo{author}{Cody~B. \surnamestart Thomas\surnameend}
  (\bibinfo{year}{2020}): \emph{\bibinfo{title}{{MITRE ATT\&CK}: Design and
  Philosophy}}.
\newblock \bibinfo{type}{Technical Report}, \bibinfo{institution}{The MITRE
  Corporation}, \bibinfo{address}{McLean, VA, USA}.
\newblock
  \urlprefix\url{https://attack.mitre.org/docs/ATTACK_Design_and_Philosophy_March_2020.pdf}.
\newblock \bibinfo{note}{Revision of original 2018 report; accessed
  30~Apr~2025}.

\bibitemdeclare{misc}{dfir_emotet}
\bibitem{dfir_emotet}
\bibinfo{author}{\surnamestart {The DFIR Report}\surnameend}
  (\bibinfo{year}{2022}): \emph{\bibinfo{title}{{Emotet Strikes Again – {LNK}
  File Leads to Domain Wide Ransomware}}}.
\newblock
  \bibinfo{howpublished}{\url{https://thedfirreport.com/2022/11/28/emotet-strikes-again-lnk-file-leads-to-domain-wide-ransomware/}}.
\newblock \bibinfo{note}{Accessed 2025-04-29}.

\bibitemdeclare{misc}{dfir_blackcat24}
\bibitem{dfir_blackcat24}
\bibinfo{author}{\surnamestart {The DFIR Report}\surnameend}
  (\bibinfo{year}{2024}): \emph{\bibinfo{title}{{Nitrogen Campaign Drops Sliver
  and Ends With BlackCat Ransomware}}}.
\newblock
  \bibinfo{howpublished}{\url{https://thedfirreport.com/2024/09/30/nitrogen-campaign-drops-sliver-and-ends-with-blackcat-ransomware/}}.
\newblock \bibinfo{note}{Accessed 2025-04-29}.

\bibitemdeclare{misc}{cisa_qakbot24}
\bibitem{cisa_qakbot24}
\bibinfo{author}{\surnamestart {Votiro Labs}\surnameend}
  (\bibinfo{year}{2024}): \emph{\bibinfo{title}{{The Return of Qakbot:
  Navigating New Threats in 2024}}}.
\newblock
  \bibinfo{howpublished}{\url{https://votiro.com/blog/the-return-of-qakbot-navigating-new-threats-in-2024/}}.
\newblock \bibinfo{note}{Accessed 2025-04-29}.

\bibitemdeclare{misc}{snort}
\bibitem{snort}
 (\bibinfo{year}{2025}): \emph{\bibinfo{title}{{Snort IPS}}}.
\newblock \bibinfo{howpublished}{\url{https://www.snort.org/}}.
\newblock \bibinfo{note}{[Online; accessed 18-02-2025]}.

\bibitemdeclare{misc}{yara}
\bibitem{yara}
 (\bibinfo{year}{2025}): \emph{\bibinfo{title}{{Yara}}}.
\newblock \bibinfo{howpublished}{\url{https://github.com/VirusTotal/yara}}.
\newblock \bibinfo{note}{[Online; accessed 18-02-2025]}.

\end{thebibliography}

\appendix
\section{Technical Listings}
\label{sec:appendix_listings}

This appendix collects the GTDL signatures and their LNT counterparts used
in our LokiBot case-study.

\paragraph{GTDL partial rule.}
Multiple micro-signatures raise flags that a final rule aggregates:

\begin{flushleft}\footnotesize
\begin{lstlisting}
[DETECTION] Detection_name = 'LokibotProcess'  Apply_when = "Process"
[RULE]
v_process  = inPluginCall(IsProcessName, "ytpgwim");
v_location = inPluginCall(IsInProcessPath, "%TEMP%");
IF v_process AND v_location THEN
    GlobalFlag.Set("LokibotProcess"); 
END IF
\end{lstlisting}
\end{flushleft}

\paragraph{GTDL incident rule.}

\begin{flushleft}\footnotesize
\begin{lstlisting}
[DETECTION] Detection_name = 'LokibotIncident' Apply_when = "GlobalFlags"
[RULE]
flag1 = GlobalFlag.IsSet("LokibotProcess");
flag2 = GlobalFlag.IsSet("BotExtensions");
flag3 = GlobalFlag.IsSet("TempRunKey");
flag4 = GlobalFlag.IsSet("KnownCCAccesed");
IF flag1 AND flag2 AND flag3 AND flag4 THEN
    GlobalFlag.Set("Lokibot Incident Detected"); 
END IF
\end{lstlisting}
\end{flushleft}

\paragraph{LNT model.}  
Concrete IoC values replace each \texttt{inPluginCall(\dots)} invocation.

\begin{flushleft}\footnotesize
\begin{verbatim}
process LokibotProcess [lokiBotProcSet:FLAG_CHANNEL]
    (in var processName, processPath:String) is
  if processName == "yptgwim" and processPath == "%TEMP%" then
     lokiBotProcSet(TRUE)
  end if
end process
\end{verbatim}
\end{flushleft}

The full engine composes the partial detectors and the incident logic:

\begin{flushleft}\footnotesize
\begin{verbatim}
process Engine [lokiBotProcSet, lokiBotExtset,
   lokiBotTempRunKey:FLAG_CHANNEL, lokiBotDet:any] is
loop
  par
    LokibotProcess     [lokiBotProcSet] ("yptgwim", "%TEMP%") ||
    LokibotExtension   [lokiBotExtset]  (".exe")              ||
    LokiTempExeRunKey  [lokiBotTempRunKey] ("Run","Run")      ||
    LokiDetection [lokiBotProcSet, lokiBotExtset,
                   lokiBotTempRunKey, lokiBotDet]
  end par
end loop
end process
\end{verbatim}
\end{flushleft}

\paragraph{Attack-tree model.}  
An AND tree reproduces the four independent actions plus a confirmation
leaf, yielding a bisimilar LTS:

\begin{figure}[ht]
  \centering
  \includegraphics[scale=0.5]{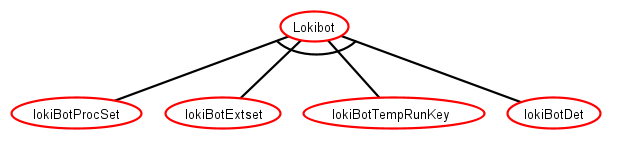}
  \caption{LokiBot attack tree}
  \vspace*{-10pt}
\end{figure}

\begin{flushleft}\footnotesize
\begin{verbatim}
process LokibotTree [lokiBotProcSet, lokiBotExtset,
   lokiBotTempRunKey:FLAG_CHANNEL, lokiBotDet:any] is
par
  LokibotProcessLeaf   [lokiBotProcSet]      ||
  LokibotExtensionLeaf [lokiBotExtset]       ||
  LokiTempExeRunKeyLeaf[lokiBotTempRunKey]   ||
  LokibotActions [lokiBotProcSet, lokiBotExtset,
                  lokiBotTempRunKey, lokiBotDet]
end par
end process
\end{verbatim}
\end{flushleft}

\begin{flushleft}\small
\begin{verbatim}
process LokibotProcessLeaf [lokiBotProcSet:FLAG_CHANNEL] is
  lokiBotProcSet(TRUE)
end process
\end{verbatim}
\end{flushleft}

\paragraph{Verification outcome.}
CADP reports \texttt{TRUE}: the detection and attack-tree LTSs are
observationally equivalent.  Runtime was \(0.234\) s for bisimulation
and \(0.254\) s for weak trace inclusion, confirming negligible overhead
for this single-AND case.

\end{document}